\documentclass[11pt]{article}

\usepackage{cite}
\usepackage{graphics}
\usepackage{subfigure}
\usepackage{geometry}
\geometry{a4paper,tmargin=3cm,bmargin=3cm,lmargin=3cm,rmargin=3cm}

\usepackage{amsmath}
\allowdisplaybreaks[1]

\usepackage{amsthm}

\newtheorem{theorem}{Theorem}
\newtheorem{lemma}[theorem]{Lemma}
\newtheorem{claim}[theorem]{Claim}

\makeatletter
\def\@endtheorem{\endtrivlist}
\makeatother

\newcommand{\psv}[1]{\mathrm{PSV}(#1)}
\newcommand{\plv}[1]{\mathrm{PLV}(#1)}
\newcommand{\nsv}[1]{\mathrm{NSV}(#1)}
\newcommand{\nlv}[1]{\mathrm{NLV}(#1)}

\newcommand{\minheap}[1]{\mathrm{Min}(#1)}
\newcommand{\maxheap}[1]{\mathrm{Max}(#1)}
\newcommand{\cminheap}[1]{\mathrm{cMin}(#1)}
\newcommand{\cmaxheap}[1]{\mathrm{cMax}(#1)}

\newcommand{\bad}{\mathrm{bad}}
\newcommand{\neutral}{\mathrm{neutral}}
\newcommand{\goodbad}{\mathrm{good/bad}}

\begin{document}

\title{The effective entropy of next/previous larger/smaller value queries}
\author{Dekel Tsur%
\thanks{Department of Computer Science, Ben-Gurion University of the Negev.
Email: \texttt{dekelts@cs.bgu.ac.il}}}
\date{}
\maketitle

\begin{abstract}
We study the problem of storing the minimum number of bits required
to answer next/previous larger/smaller value queries on an array $A$ of $n$
numbers, without storing $A$.
We show that these queries can be answered by storing at most $3.701 n$ bits.
Our result improves the result of Jo and Satti [TCS 2016] that gives an
upper bound of $4.088n$ bits for this problem.
\end{abstract}

\paragraph{Keywords} data structures, encoding model.

\section{Introduction}
A recent research area in data structures is designing data structures in
the encoding model~\cite{BrodalBD213,BrodalDR2012space,DavoodiNRR14,
Fischer11,GagieMV17,GawrychowskiN15,GolinIKRSS16,GrossiINRS17,JayapaulJRRS16,
JoLS16,JoS16}.
In this model, the goal is to design a data structure for answering queries
on some object $A$, without storing $A$.
The space complexity of the data structure should be close to the minimum space
required in order to answer the queries, without storing $A$.
The minimum space required to answer the queries is called the
\emph{effective entropy} of the problem.

Let $A$ be an array of $n$ numbers.
Consider the following four queries on $A$.
\begin{itemize}
\item
Previous smaller value ($\psv{i}$):
Given $i$, return $\max (\{ j \colon j<i, A[j] < A[i] \}\cup \{0\})$.
\item
Previous larger value ($\plv{i}$):
Given $i$, return $\max (\{ j \colon j<i, A[j] > A[i] \} \cup \{0\})$.
\item
Next smaller value ($\nsv{i}$):
Given $i$, return $\min (\{ j \colon j>i, A[j] < A[i] \} \cup \{n+1\})$.
\item
Next larger value ($\nlv{i}$):
Given $i$, return $\min (\{ j \colon j>i, A[j] > A[i] \}\cup \{n+1\})$.
\end{itemize}

The effective entropy of answering one type of queries from the four types above
is $2n-\Theta(\log n)$ bits.
If the problem is to support more than one type of queries, the effective
entropy becomes larger.
Fischer~\cite{Fischer11} showed that the effective entropy of answering both
PSV and NSV queries is $\log(3+2\sqrt{2})\cdot n-\Theta(\log n) < 2.544n$ bits.
Gawrychowski and Nicholson~\cite{GawrychowskiN15} showed that the effective
entropy of answering both PSV and PLV queries is at most $3n$ bits and at least
$3n-\Theta(\log n)$ bits.
For each of the two problems above, it is possible to build a data structure
that answers queries in constant time and with space that is equal to the
effective entropy plus $o(n)$ bits~\cite{Fischer11,GawrychowskiN15}.

Jo and Satti~\cite{JoS16} studied the problem of answering all four
queries. They showed that the effective entropy is at most $4n+o(n)$ bits
on arrays with no consecutive equal elements,
and at most $\log_2 17\cdot n +o(n) < 4.088n$ bits on general arrays.
They also showed that it is possible to build data structures that answer
queries in constant time with space complexity $4n+o(n)$ bits
on arrays with no consecutive equal elements,
and $4.585n$ bits for general arrays.
In this paper we improve the results of Jo and Satti.
We show that the effective entropy for answering all four queries is at most
$(2+\log 3)n +o(n) < 3.585n$ bits on array with no consecutive equal elements,
and at most $\log 13 \cdot n+o(n) < 3.701 n$ bits on general arrays.

\section{Preleminaries}
\subsection{Encoding PSV queries}\label{sec:psv}
The \emph{2d-min heap}~\cite{FischerH11} of an array $A$ of size $n$,
denoted $\minheap{A}$,
is an ordinal tree with nodes $0,\ldots,n$.
For every $i > 0$, the parent of node $i$ is $\psv{i}$.
The children of a node $i$ are ordered in increasing order of their names.
See Figure~\ref{fig:minheap} for an example.
Note that the preorder of the nodes is $0,\ldots,n$.
Therefore, the tree $\minheap{A}$ can be encoded by just storing its
topology.
Since $\minheap{A}$ is an ordinal tree with $n+1$ nodes, it follows that
the effective entropy of PSV queries is at most
$\lceil\log C_{n+1}\rceil = 2n-\Theta(\log n)$,
where $C_n = \frac{1}{n+1}\binom{2n}{n}$ is the number of ordinal trees with
$n$ nodes.
We also define the 2d-max heap, denoted $\maxheap{A}$, to be an ordinal tree
in which the parent of node $i>0$ is $\plv{i}$.

\begin{figure}
\centering
\subfigure{\includegraphics{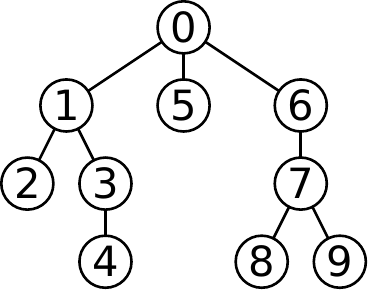}}

\begin{tabular}{|c|c|c|c|c|c|c|c|c|c|}
\hline
$i$    & 1 & 2 & 3 & 4 & 5 & 6 & 7 & 8 & 9\\
$A[i]$ & 3 & 8 & 5 & 6 & 3 & 2 & 7 & 10 & 9\\
\hline
\end{tabular}
\caption{An example of a 2d-min heap of an array $A$.}
\label{fig:minheap}
\end{figure}


%

\subsection{Encoding PSV/NSV queries}\label{sec:psv-nsv}
In order to encode both PSV and NSV queries, Fischer~\cite{Fischer11} defined
the \emph{colored 2d-min heap} of an array $A$, denoted $\cminheap{A}$,
to be the tree $\minheap{A}$ with the following coloring of its nodes.
If node $i$ has right siblings and $A[i] \neq A[j]$, where $j$ is the immediately
right sibling of $i$, then $i$ is colored red.
Otherwise, $i$ is colored blue.
See Figure~\ref{fig:cminheap} for an example.
Fischer showed that if $\cminheap{A}$ is known, both PSV and NSV queries on $A$
can be answered without storing $A$.
Since $\cminheap{A}$ is a Schr\"oder tree, it follows that the effective entropy
of PSV/NSV queries is at most $\log(3+2\sqrt{2})\cdot n-\Theta(\log n)$
bits~\cite{MerliniSV04}.

\begin{figure}
\centering
\subfigure{\includegraphics{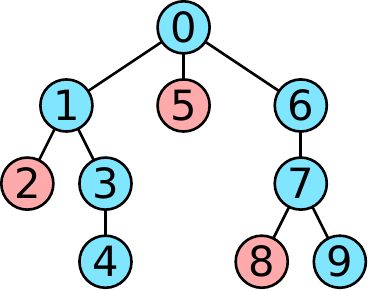}}

\begin{tabular}{|c|c|c|c|c|c|c|c|c|c|}
\hline
$i$    & 1 & 2 & 3 & 4 & 5 & 6 & 7 & 8 & 9\\
$A[i]$ & 3 & 8 & 5 & 6 & 3 & 2 & 7 & 10 & 9\\
\hline
\end{tabular}
\caption{An example of a colored 2d-min heap of an array $A$.}
\label{fig:cminheap}
\end{figure}

\subsection{Encoding PSV/PLV queries}\label{sec:psv-plv}
To answer both PSV and PLV queries, we can store both
$\minheap{A}$ and $\maxheap{A}$.
Gawrychowski and Nicholson\cite{GawrychowskiN15} showed that
for an array $A$ with no consecutive equal elements,
$\minheap{A}$ and $\maxheap{A}$ can be encoded using $3n-1$ bits.
The proof of this result is based on the following claim.

\begin{claim}\label{clm:leaf-internal}
If $A$ has no consecutive equal elements, then for every $0 < i < n$,
$i$ is a leaf in $\minheap{A}$ if and only if
$i$ is an internal node of $\maxheap{A}$
\end{claim}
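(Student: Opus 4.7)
The plan is to pivot on the comparison between $A[i]$ and $A[i+1]$. Since $A$ has no consecutive equal elements and $0 < i < n$, both values exist and exactly one of $A[i] < A[i+1]$ or $A[i] > A[i+1]$ holds. I would show that the first inequality forces $i$ to be internal in $\minheap{A}$ and a leaf in $\maxheap{A}$, while the second does the reverse; combining these two statements yields the claimed equivalence.

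Assume first that $A[i] < A[i+1]$. Then $i \in \{j < i+1 : A[j] < A[i+1]\}$, and since $i$ is the maximum element of $\{0,\dots,i\}$, we get $\psv{i+1} = i$. Hence $i+1$ is a child of $i$ in $\minheap{A}$, so $i$ is internal there. To see that $i$ is a leaf in $\maxheap{A}$, suppose for contradiction that some $j > i$ has $\plv{j} = i$. Then $A[j] < A[i]$ and $A[k] \leq A[j]$ for every $i < k < j$. If $j = i+1$ this contradicts $A[i+1] > A[i] > A[j]$; if $j > i+1$, applying the inequality with $k = i+1$ gives $A[i+1] \leq A[j] < A[i] < A[i+1]$, again a contradiction.

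The case $A[i] > A[i+1]$ is handled by swapping the roles of $\minheap{A}$ and $\maxheap{A}$ (and of PSV and PLV) in the argument above, yielding that $i$ is internal in $\maxheap{A}$ and a leaf in $\minheap{A}$. Consequently, in both cases exactly one of the two trees has $i$ as an internal node and the other has $i$ as a leaf, which is precisely the biconditional in the claim.

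There is no real obstacle here: the argument reduces entirely to inspecting the single neighbor $i+1$, which is sufficient because $\psv{i+1}$ and $\plv{i+1}$ are determined by the immediate predecessor whenever the corresponding strict inequality holds. The hypothesis on consecutive equal elements is used only to rule out $A[i] = A[i+1]$, so that the dichotomy above is exhaustive.
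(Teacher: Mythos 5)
Your proof is correct and takes essentially the same approach as the paper: split on whether $A[i] < A[i+1]$ or $A[i] > A[i+1]$ (the no-consecutive-equal-elements hypothesis making this exhaustive), and show the first case forces $i$ to be internal in $\minheap{A}$ and a leaf in $\maxheap{A}$, with the second case symmetric. The only difference is in the leaf half: the paper notes that $i+1$ is not a child of $i$ in $\maxheap{A}$ and invokes the fact that the descendants of an internal node form a contiguous preorder interval starting at $i+1$, whereas you verify directly from the definition of $\plv{\cdot}$ that no $j > i$ can have $\plv{j} = i$; both justifications are valid.
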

\begin{proof}
Since $A$ has no consecutive equal elements, we have that
either $A[i] < A[i+1]$ or $A[i] > A[i+1]$.
In the former case, $\psv{i+1} = i$. Therefore, $i+1$ is a child of $i$ in
$\minheap{A}$, and thus $i$ is an internal node in $\minheap{A}$.
Additionally, $\plv{i+1} < i$, so  $i+1$ is not a child of $i$ in $\maxheap{A}$.
Since the names of the nodes of $\maxheap{A}$ are according to their ranks in
the preorder, the descendants of an internal node $j$ are $j+1,\ldots,j'$ for
some $j'$.
Since $i+1$ is not a descendant of $i$ in $\maxheap{A}$, it follows that
$i$ is a leaf in $\maxheap{A}$.

The proof for the case when $A[i] > A[i+1]$ is analogous and thus omitted.
\end{proof}

The encoding $\minheap{A}$ and $\maxheap{A}$ consists of three binary strings
$U$, $T_{\minheap{A}}$, and $T_{\maxheap{A}}$.
The string $U$ is a string of length $n-1$ in which $U[i] = 1$ if and only if
$i$ is a leaf in $\minheap{A}$.
The strings $T_{\minheap{A}}$ and $T_{\maxheap{A}}$ are defined as follows.
Start with empty strings $T_{\minheap{A}}$ and $T_{\maxheap{A}}$.
Then, for $i=0,1,\ldots,n-1$, if $i = 0$ or $U[i] = 0$
(namely, $i$ is an internal node in $\minheap{A}$),
append the string $1^{d_i-1}0$ to $T_{\minheap{A}}$,
where $d_i$ is the number of children of node~$i$ in $\minheap{A}$.
Additionally, if $i = 0$ or $U[i] = 1$,
append the string $1^{d'_i-1}0$ to $T_{\maxheap{A}}$,
where $d'_i$ is the number of children of node~$i$ in $\maxheap{A}$.
See Figure~\ref{fig:minheap+maxheap} for an example.

By Claim~\ref{clm:leaf-internal}, the trees $\minheap{A}$ and $\maxheap{A}$
can be reconstructed from the strings $U,T_{\minheap{A}},T_{\maxheap{A}}$.
It is easy to show that $|T_{\minheap{A}}|+|T_{\maxheap{A}}|=2n$.
Therefore, the total size of this encoding is $3n-1$ bits.

\begin{figure}
\centering
\subfigure[$\minheap{A}$]{\includegraphics{figs/minheap}}
\subfigure[$\maxheap{A}$]{\includegraphics{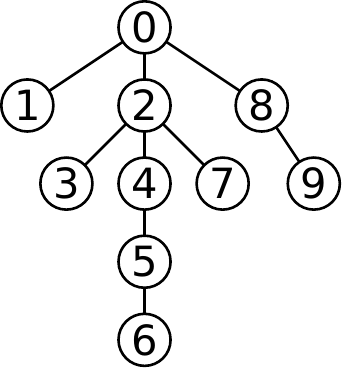}}

\begin{tabular}{|c|c|c|c|c|c|c|c|c|c|c|}
\hline
$i$    & 0 & 1 & 2 & 3 & 4 & 5 & 6 & 7 & 8 & 9\\
$A[i]$ &   & 3 & 8 & 5 & 6 & 3 & 2 & 7 & 10 & 9\\
$U$    &   & 0 & 1 & 0 & 1 & 1 & 0 & 0 & 1 & \\
$T_{\minheap{A}}$ & 110 & 10 &  & 0 &  &  & 0 & 10 &  & \\
$T_{\maxheap{A}}$ & 110 &  & 110 &  & 0 & 0 &  &  & 0 & \\
\hline
\end{tabular}
\caption{An example of the trees $\minheap{A}$ and $\maxheap{A}$
of an array $A$ and the encoding of these trees using $3n-1$ bits.}
\label{fig:minheap+maxheap}
\end{figure}

\section{Arrays with no consecutive equal elements}
\begin{theorem}\label{thm:no-consecutive}
The effective entropy of PSV/PLV/NSV/NLV queries on arrays with no consecutive
equal elements is at most $(2+\log 3)n +o(n) < 3.585n$.
\end{theorem}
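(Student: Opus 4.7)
The strategy is to extend the $3n-1$ bit Gawrychowski--Nicholson encoding from Section~\ref{sec:psv-plv} (which encodes $\minheap{A}$ and $\maxheap{A}$ via $U$, $T_{\minheap{A}}$, $T_{\maxheap{A}}$) by incorporating the coloring information needed for NSV and NLV queries, i.e., so as to effectively encode $\cminheap{A}$ and $\cmaxheap{A}$. A naive extension appends one coloring bit per node with a right sibling in either heap, which adds $n-1$ bits in total and yields a $4n-2$ bit encoding; the target of $(2+\log 3)n$ bits requires saving roughly $0.415n$ bits, which must come from a joint (rather than independent) encoding of the topology indicator $U$ and the coloring data.

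My plan is to retain the two degree strings $T_{\minheap{A}}$ and $T_{\maxheap{A}}$ (each of length $n$, total $2n$ bits, as in Section~\ref{sec:psv-plv}) and to replace the binary string $U$ together with the coloring bits by a single ternary string $W$ of length $n-1$. Each symbol $W[i]\in\{0,1,2\}$ should simultaneously record $U[i]$ (leaf/internal status of $i$ in $\minheap{A}$, which by Claim~\ref{clm:leaf-internal} also determines the status in $\maxheap{A}$) together with a ``relevant'' coloring bit. Since $W$ can be stored in $\lceil(n-1)\log 3\rceil$ bits via a standard binary encoding of ternary data, the total size of the encoding is $2n+\lceil(n-1)\log 3\rceil=(2+\log 3)n+o(n)$, matching the claimed bound.

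The main obstacle is to justify that a single ternary symbol per position suffices to encode both $U[i]$ and the coloring data for both heaps. The key is to show that, out of the $2\times 2\times 2=8$ conceivable (leaf/internal, min-color, max-color) configurations at each position, only three can arise given the structural constraints. These constraints include: (i) Claim~\ref{clm:leaf-internal} linking the two heaps' leaf patterns at $i$; (ii) right siblings in $\minheap{A}$ have non-increasing $A$-values while right siblings in $\maxheap{A}$ have non-decreasing $A$-values, so ``blue'' events (equal adjacent siblings) are strongly constrained; and (iii) the no-consecutive-equal hypothesis, which makes $U[i]$ also equal the indicator of $A[i]<A[i+1]$, pinning down local behavior at $i$. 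The heart of the argument is a careful case analysis on $U[i]$ and on the sibling situation of $i$ in both heaps, showing that one of the two heap colorings at $i$ is always determined by the rest of the data, leaving only a single genuinely-free coloring bit per position to fold into $W[i]$.

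Once the ternary encoding is defined, the remaining steps are routine: describe a single-pass preorder-driven decoding algorithm that reconstructs $\cminheap{A}$ and $\cmaxheap{A}$ from $(W,T_{\minheap{A}},T_{\maxheap{A}})$ in time linear in $n$, and invoke the techniques of \cite{Fischer11,GawrychowskiN15} to answer PSV, NSV, PLV, and NLV queries from the two colored heaps. The bit count $(2+\log 3)n+o(n)<3.585n$ then gives the desired effective entropy bound.
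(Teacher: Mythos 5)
Your high-level plan---keep the two degree strings $T_{\minheap{A}}$ and $T_{\maxheap{A}}$ and fold the leaf/internal indicator $U[i]$ together with the coloring into one symbol per position---is the right one, and your accounting of which colors are forced is correct as far as it goes: by Claim~\ref{clm:leaf-internal} only the relevant tree of $i$ can carry a free color bit, and the color in the non-relevant tree is forced (red if $i$ has right siblings there, since a leaf's immediate right sibling is $i+1$ and $A[i]\neq A[i+1]$; blue otherwise). But the central claim of your proposal---that at most three of the eight conceivable configurations arise at \emph{every} position, so a single ternary string $W$ of length $n-1$ suffices---is false. The number of realizable configurations at position $i$ depends on the sibling structure of $i$ in the two heaps. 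If $i$ has right siblings in both $\cminheap{A}$ and $\cmaxheap{A}$ (a \emph{bad} index in the paper's terminology), the decoder must learn both which tree is relevant (two choices, and this bit is indispensable for parsing the degree strings) and the color of $i$ in that relevant tree (two further choices: $i$ is internal there, so its immediate right sibling $j$ satisfies $j>i+1$ and the no-consecutive-equal hypothesis places no constraint on whether $A[i]=A[j]$; the example of Figure~\ref{fig:cminheap+cmaxheap} realizes both colors). That is four configurations, which cannot be packed into one ternary symbol.

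The paper closes exactly this gap with a counting argument your proposal lacks. It classifies each index as good (no right siblings in either heap: $2$ configurations, $1$ bit), bad ($4$ configurations, $2$ bits), or neutral (right siblings in exactly one heap: $3$ configurations, $\log 3$ bits), and proves in Lemma~\ref{lem:good-bad}, by induction on $n$, that the number of good indices equals the number of bad indices. A good/bad pair therefore costs $3 < 2\log 3$ bits, so the encoding length $2n+3g+(n-2g)\log 3+O(\log n)$ is maximized when every index is neutral, yielding the bound $(2+\log 3)n+o(n)$. Without this balance lemma your uniform ternary string cannot be made to work; the lemma, not the local case analysis, is the real content of the theorem beyond the observations you already have.
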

\begin{proof}
To answer PSV/PLV/NSV/NLV queries, it suffices to encode the trees
$\cminheap{A}$ and $\cmaxheap{A}$. We will show that this can be done
using $(2+\log 3)n + o(n)$ bits.
To reduce the size of the encoding, we use the following claim.
\begin{claim}\label{clm:red}
If $i$ is a leaf in $\cminheap{A}$ (resp., $\cmaxheap{A}$) and $i$ has
right siblings then $i$ is red in $\cminheap{A}$ (resp., $\cmaxheap{A}$).
\end{claim}
\begin{proof}
Since $i$ is a leaf, the immediate right sibling of $i$ is $i+1$.
Due to the assumption that $A$ has no consecutive equal elements we have
that $A[i] \neq A[i+1]$. Therefore, $i$ is red.
\end{proof}
We say that an index $0 < i < n$ is \emph{good} if $i$ does not have
right siblings in $\cminheap{A}$ and in $\cmaxheap{A}$.
We say that $0 < i < n$ is \emph{bad} if $i$ has right siblings in
$\cminheap{A}$ and in $\cmaxheap{A}$.
If $0 < i < n$ is not good or bad we say that $i$ is \emph{neutral}.
For an index $0<i<n$, the \emph{relevant tree of $i$} is the tree from
$\cminheap{A},\cmaxheap{A}$ in which $i$ is an internal node
(note that by Claim~\ref{clm:leaf-internal} there is exactly one tree in which
$i$ is an internal node).

In the following, we encode the color red by $0$ and the color blue by $1$.
The encoding of $\cminheap{A},\cmaxheap{A}$ consists of the following strings.
\begin{itemize}
\item
The strings $T_{\minheap{A}}$ and $T_{\maxheap{A}}$ (these strings
were defined in Section~\ref{sec:psv-plv}).
\item
A binary string $U_\goodbad$ that is obtained by concatenating the
characters $U[i]$ for every $i$ which is either good or bad
(the string $U$ was defined in Section~\ref{sec:psv-plv}).
\item
A binary string $V_\bad$ obtained by concatenating the color of $i$ in
the relevant tree of $i$ for every bad $i$.
\item
A ternary string $V_\neutral$ obtained by concatenating a character $c_i$
for every neutral index $i$.
If $i$ does not have right siblings in its
relevant tree, $c_i = 2$.
Otherwise, $c_i$ is the color of $i$ in the relevant tree of $i$.
\end{itemize}
See Figure~\ref{fig:cminheap+cmaxheap} for an example.

\begin{figure}
\centering
\subfigure[$\cminheap{A}$]{\includegraphics{figs/cminheap}}
\subfigure[$\cmaxheap{A}$]{\includegraphics{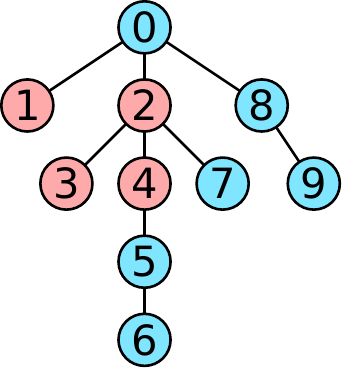}}

\begin{tabular}{|c|c|c|c|c|c|c|c|c|c|c|}
\hline
$i$    & 0     & 1 & 2 & 3 & 4 & 5 & 6 & 7 & 8 & 9\\
$A[i]$ &       & 3 & 8 & 5 & 6 & 3 & 2 & 7 & 10 & 9\\
$T_{\minheap{A}}$ & 110 & 10 &  & 0 &  &  & 0 & 10 &  & \\
$T_{\maxheap{A}}$ & 110 &  & 110 &  & 0 & 0 &  &  & 0 & \\
$U_\goodbad$ & & 0 & 1 &   &   &   & 0 & 0 &   & \\
$V_\bad$     & & 1 & 0 &   &   &   &   &   &   & \\
$V_\neutral$ & &   &   & 2 & 0 & 2 &   &   & 2 & \\
\hline
\end{tabular}
\caption{An example of the trees $\cminheap{A}$ and $\cmaxheap{A}$
of an array $A$ and the encoding of these trees.
The good indices are 6,7 and the bad indices are 1,2.}
\label{fig:cminheap+cmaxheap}
\end{figure}

We now show that given the string $T_{\minheap{A}}$, $T_{\maxheap{A}}$,
$U_\goodbad$, $V_\bad$, and $V_\neutral$ we can reconstruct the trees
$\cminheap{A}$ and $\cmaxheap{A}$.
We initialize two trees $T_1,T_2$ to contain node $0$.
At the end of the following algorithm,
$T_1 = \cminheap{A}$ and $T_2 = \cmaxheap{A}$.
By reading the prefixes of $T_{\minheap{A}}$ and $T_{\maxheap{A}}$ until the
first zero in each string, we know the degrees of node $0$ in
$\minheap{A}$ and in $\maxheap{A}$.
We now go over $i=1,\ldots,n$ and add node $i$ to the trees $T_1$ and $T_2$.
This is done as follows.
From the previous iterations of the algorithm,
we know the number of children of the nodes $0,\ldots,i-1$ in $\minheap{A}$
and in $\maxheap{A}$.
We make node $i$ the rightmost child of node $j$ in $T_1$ where $j$ is the
minimum integer such that the number of children of $j$ in $T_1$ is less than
the number of children of $j$ in $\minheap{A}$.
We also add node $i$ to the tree $T_2$ similarly.
We then find which tree is the relevant tree of $i$.
After we know the relevant tree of $i$, we read unread characters from the
string $T_{\minheap{A}}$ or $T_{\maxheap{A}}$ that corresponds the relevant
tree of $i$ until reaching the first zero.
This gives us the number of children of $i$ in its relevant tree.
Moreover, by Claim~\ref{clm:leaf-internal},
$i$ does not have children in the non-relevant tree.
Finally, we find the color of $i$ in $\minheap{A}$ and $\maxheap{A}$.

Since we know the number of children of the parent of $i$ in $\minheap{A}$,
we know whether node $i$ has right siblings in $\minheap{A}$
($i$ has right siblings if and only if the number of children of the parent
of $i$ in $T_1$ is less than the number of children of the parent of $i$ in
$\minheap{A}$).
Similarly, we know whether $i$ has a right sibling in $\maxheap{A}$.
Therefore, we know whether $i$ is good, bad, or neutral.

If $i$ is good, reading the next unread character from $U_\goodbad$ gives us
the relevant tree of $i$.
Since $i$ has no right siblings in $\cminheap{A}$ and in $\cmaxheap{A}$,
the color of $i$ is blue in both trees.

If $i$ is bad, reading the next unread character from $U_\goodbad$ gives us
the relevant tree of $i$. We also read the next unread character from $V_\bad$
to know the color of $i$ in the relevant tree. The color of $i$ in the
non-relevant tree is red by Claim~\ref{clm:red}.

Finally, if $i$ is neutral, we read the next unread character from $V_\neutral$
and let $c$ be this character.
Suppose without loss of generality that $i$ does not have right siblings in
$\cminheap{A}$ and it has right siblings in $\cmaxheap{A}$.
The color of $i$ in $\cminheap{A}$ is blue (since $i$ does not have right
siblings in $\cminheap{A}$).
If $c = 2$ then the relevant tree of $i$ is $\cminheap{A}$ and
the color of $i$ in $\cmaxheap{A}$ is red (by Claim~\ref{clm:red}).
Otherwise, the relevant tree of $i$ is $\cmaxheap{A}$ and
the color of $i$ in $\cmaxheap{A}$ is red if $c=0$ and blue if $c=1$.

We now analyze the size of the encoding. We need the following lemma.
\begin{lemma}\label{lem:good-bad}
The number of good indices is equal to the number of bad indices.
\end{lemma}
\begin{proof}
For the purpose of the proof we also define the index $n$ to be a good index.
Thus, we now need to show that the number of good indices is equal to the
number of bad indices plus one.
We prove this claim using induction on $n$.
The base of the induction, $n = 1$, is true since in this case there is one
good index and no bad indices.

Now suppose that $n > 1$.
Suppose without loss of generality that $A[n-1] > A[n]$.
Then, node $n$ is the only child of $n-1$ in $\cmaxheap{A}$.
Moreover, node $n$ is not a child of $n-1$ in $\cminheap{A}$.
Let $j$ be the parent of $n$ in $\cminheap{A}$.
Since the names of the nodes are according to their ranks in the preorder,
we have that nodes $j+1,\ldots,n-1$ are descendants of $j$ in $\cminheap{A}$,
and therefore node $j+1$ is a child of $j$ in $\cminheap{A}$.
Therefore, $n$ has left siblings in $\cminheap{A}$.
Let $k$ be the immediate left sibling of $n$ in $\cminheap{A}$.

Let $A'$ be the array obtained by taking the first $n-1$ elements of $A$.
We have that the trees $\minheap{A'}$ and $\maxheap{A'}$ are obtained by
deleting node $n$ from $\minheap{A}$ and $\maxheap{A}$, respectively.

Since $n$ is the single child of $n-1$ in $\maxheap{A}$, we conclude that
a node $i$ has right siblings in $\maxheap{A'}$ if and only if
$i$ has right siblings in $\maxheap{A}$.
Since $n$ is the immediate right sibling of $k$ in $\minheap{A}$, we have that
a node $i\neq k$ has right siblings in $\minheap{A'}$ if and only if
$i$ has right siblings in $\minheap{A}$.
Moreover, $k$ has right siblings in $\minheap{A}$ but not in $\minheap{A'}$.
It follows that
\begin{itemize}
\item $n$ is a good index with respect to $A$, but not with respect to $A'$.
\item Either $k$ is a good index with respect to $A'$ and neutral with respect
to $A$ (if $k$ does not have right siblings in $\maxheap{A'}$),
or $k$ is a neutral index with respect to $A'$ and bad with respect to $A$.
\item For every $i \neq k,n$, $i$ is good (resp., bad) with respect to $A'$
if and only if $i$ is good (resp., bad) with respect to $A'$.
\end{itemize}
It follows that the difference between the number of good indices and bad
indices with respect to $A$ is equal to the difference of these numbers with
respect to $A'$.
By the induction hypothesis, the latter difference is $1$.
\end{proof}

Let $g$ be the number of good indices.
The combined size of $T_{\minheap{A}}$ and $T_{\maxheap{A}}$ is $2n$ bits.
The size of $U_\goodbad$ is $2g$ bits and the size of $V_\bad$ is $g$ bits.
The string $V_\neutral$ has length $n-1-2g$ so it can be encoded using
$\lceil(n-1-2g)\log 3\rceil$ bits.
We also need to store the lengths of the strings which requires
$O(\log n)$ bits.
The total size of the encoding is $2n+3g+(n-2g)\log 3+O(\log n)$ bits.
This expression is maximized when $g = 0$, and the theorem follows.
\end{proof}

\section{General arrays}
\begin{lemma}\label{lem:inequality}
For a constant $c > 0$ and integers $n,k$,
$c(n-k)+\log \binom{n}{k} \leq \log{(2^c+1)}\cdot n$.
\end{lemma}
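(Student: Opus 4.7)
The plan is to exponentiate the inequality so that it becomes a purely algebraic statement about positive quantities. Writing $c(n-k) = \log 2^{c(n-k)}$ and $\log(2^c+1)\cdot n = \log(2^c+1)^n$, the claim is equivalent to
\[
2^{c(n-k)}\binom{n}{k}\leq (2^c+1)^n.
\]

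The key observation is that the left-hand side is precisely one of the terms in the binomial expansion of the right-hand side. By the binomial theorem,
\[
(2^c+1)^n=\sum_{j=0}^{n}\binom{n}{j}(2^c)^j=\sum_{k=0}^{n}\binom{n}{k}2^{c(n-k)},
\]
where the second equality uses the substitution $j=n-k$ together with $\binom{n}{n-k}=\binom{n}{k}$. Since every term on the right is nonnegative and $\binom{n}{k}2^{c(n-k)}$ is the term corresponding to index $k$, it is bounded above by the full sum. Taking $\log$ of both sides yields the lemma.

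There is no real obstacle here; once the exponentiated form is written down, the binomial theorem gives the bound immediately. The only point to be mildly careful about is the indexing: writing $(2^c+1)^n=((2^c)+1)^n$ rather than $(1+2^c)^n$ and then re-indexing $j\mapsto n-k$, so that the exponent on $2^c$ comes out to $n-k$ rather than $k$.
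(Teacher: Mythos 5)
Your proof is correct, and it takes a genuinely different (and cleaner) route than the paper's. The paper substitutes $y=k/n$, bounds $\binom{n}{yn}$ by the entropy-type estimate $n\log\bigl(1/(y^y(1-y)^{1-y})\bigr)$, and then maximizes the resulting function $f(x)=c(1-x)+\log\bigl(1/(x^x(1-x)^{1-x})\bigr)$ over $x$ by calculus, finding the maximum $\log(2^c+1)$ at $x^*=1/(2^c+1)$. You instead exponentiate and observe that $2^{c(n-k)}\binom{n}{k}$ is a single term in the binomial expansion of $(2^c+1)^n$, so the inequality is immediate from nonnegativity of the remaining terms. Your argument is more elementary: it needs no calculus, no separate justification of the entropy bound on $\binom{n}{yn}$ (whose standard proof is itself a one-term-of-a-binomial-expansion argument), and it handles the boundary cases $k=0$ and $k=n$ without the limiting interpretation of $y^y$ that the paper's formula implicitly requires. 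What the paper's approach buys in exchange is the identification of the maximizing ratio $k/n=1/(2^c+1)$, which shows the bound is asymptotically tight; your argument proves the inequality but does not exhibit where equality is approached. For the lemma as stated, your proof is complete and arguably preferable.
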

\begin{proof}
Let $y = k/n$. We have that
\[c(n-k)+\log \binom{n}{k} = c(1-y)n+\log\binom{n}{yn}
\leq c(1-y)n + n\log\left(\frac{1}{y^y{(1-y)}^{1-y}}\right).
\]
Let $f(x) = c(1-x)+\log(\frac{1}{x^x{(1-x)}^{1-x}})$.
The derivative of $f$ is $-c+\log(1-x)-\log x$. Therefore, $f$ is maximized at
$x^*=1/(2^c+1)$ and $f(x^*) = \log(2^c+1)$.
\end{proof}

\begin{theorem}
The effective entropy of PSV/PLV/NSV/NLV queries is at most
$\log 13 \cdot n +o(n) < 3.701 n$.
\end{theorem}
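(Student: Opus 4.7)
The plan is to reduce the general case to the no-consecutive-equal setting of Theorem~\ref{thm:no-consecutive} by encoding the positions of consecutive-equal pairs separately and combining the two contributions via Lemma~\ref{lem:inequality}.

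Concretely, let $E = \{i : 1 \le i \le n-1,\ A[i] = A[i+1]\}$ and $k = |E|$. I would store $E$ in $\lceil\log\binom{n-1}{k}\rceil + O(\log n)$ bits, e.g., by a combinatorial numbering of $k$-subsets of $\{1,\ldots,n-1\}$ together with $k$. Let $A'$ be the array of length $n-k$ obtained by collapsing each maximal run of equal consecutive values in $A$ to a single occurrence. Since $A'$ has no consecutive equal elements, Theorem~\ref{thm:no-consecutive} gives an encoding of $\cminheap{A'}$ and $\cmaxheap{A'}$ using $(2+\log 3)(n-k) + o(n)$ bits.

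The main step is to verify that all four queries on $A$ can be recovered from $E$ together with this encoding of $A'$. Let $\pi(i)$ be the index in $A'$ of the run containing position $i$, and let $\rho^-(p), \rho^+(p)$ be the first and last indices in $A$ of run $p$; all three maps are determined by $E$. Within a run, all elements are equal and hence cannot witness any strict-inequality query, while across distinct runs the strict inequalities on $A$ and on $A'$ coincide. A short case check then yields
\[
\psv{i}_A = \rho^+\bigl(\psv{\pi(i)}_{A'}\bigr), \qquad \nsv{i}_A = \rho^-\bigl(\nsv{\pi(i)}_{A'}\bigr),
\]
with analogous identities for $\plv{\cdot}$ and $\nlv{\cdot}$ and with the boundary conventions $\rho^+(0) = 0$ and $\rho^-(n-k+1) = n+1$. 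Since $\cminheap{A'}$ and $\cmaxheap{A'}$ determine all four queries on $A'$, the queries on $A$ are determined.

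Finally, the total size is at most
\[
(2+\log 3)(n-k) + \log\binom{n-1}{k} + o(n) \le (2+\log 3)(n-k) + \log\binom{n}{k} + o(n),
\]
and Lemma~\ref{lem:inequality} with $c = 2+\log 3$, for which $2^c + 1 = 13$, bounds this by $\log 13 \cdot n + o(n)$. The main obstacle is the verification of the query-reconstruction identities; once those are in hand, the size bound is a direct application of the lemma, whose form of $\log(2^c+1)\cdot n$ is precisely what forces the trade-off between the $(2+\log 3)$-per-kept-index cost from Theorem~\ref{thm:no-consecutive} and the $\log\binom{n}{k}$ combinatorial cost of naming the collapsed positions.
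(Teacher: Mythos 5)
Your proposal is correct and follows essentially the same route as the paper: encode the set of positions with $A[i]=A[i+1]$ combinatorially, apply Theorem~\ref{thm:no-consecutive} to the collapsed array $A'$, and balance the two costs via Lemma~\ref{lem:inequality} with $c=2+\log 3$. The only difference is that you spell out the query-reconstruction maps $\pi,\rho^-,\rho^+$ explicitly, whereas the paper delegates this reduction to Jo and Satti.
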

\begin{proof}
As in~\cite{JoS16}, we define a binary string $C$ of length $n-1$ in which
$C[i] = 1$ if and only if $A[i] = A[i+1]$.
We also define an array $A'$ that is obtained from $A$ by deleting the
elements $A[i]$ for every $i$ such that $C[i] = 1$.
Let $k$ be the number of ones in $C$.
In order to answer PSV/PLV/NSV/NLV queries on $A$, it suffices to
store $C$ and information for answering PSV/PLV/NSV/NLV queries on $A'$.
By Theorem~\ref{thm:no-consecutive}, the latter can be done using
at most $(2+\log 3)(n-k)$ bits.
Moreover, $C$ can be stored using
$\lceil \log n\rceil+\lceil \log \binom{n}{k}\rceil$ bits.
The theorem now follows from Lemma~\ref{lem:inequality}.
\end{proof}

\bibliographystyle{plain}
\bibliography{ds}

\end{document}